\theoremstyle{definition}
\newtheorem{definition}{Definition}
\newtheorem{theorem}{Theorem}
\newtheorem{lemma}{Lemma}
\newtheorem{remark}{Remark}
\newtheorem{proposition}{Proposition}
\newcommand{\e}{\pmb{e}}
\newcommand{\x}{\pmb{x}}
\newcommand{\X}{\pmb{X}}
\newcommand{\Z}{\pmb{Z}}
\newcommand{\xu}{\pmb{u}}
\newcommand{\pp}{\pmb{\xi}}
\newcommand{\be}{\begin{equation}}
\newcommand{\ee}{\end{equation}}
\newcommand{\bea}{\begin{equation*}}
\newcommand{\eea}{\end{equation*}}
\newcommand{\E}{\mathbb{E}}
\newcommand{\R}{\mathbb{R}}
\newcommand{\trace}{\text{Tr}\,}
\newcommand{\mA}{\mathcal{A}}
\newcommand{\mT}{\mathcal{T}}
\newcommand{\mD}{\mathcal{D}}
\newcommand{\mK}{\mathcal{K}}
\newcommand{\mO}{\mathcal{O}}
\def\lV{\left\lVert}
\def\rV{\right\lVert}
\def\lv{\left\lvert}
\def\rv{\right\lvert}
\def\lk{\left(}
\def\rk{\right)}
\def\lg{\langle}
\def\rg{\rangle}
\def\lz{\left[}
\def\rz{\right]}
\begin{document}
	\title{ Low-Rank Toeplitz Matrix Restoration: Descent Cone Analysis and Structured Random Matrix\footnote{This work was supported in part by the NSFC under grant numbers U21A20426 and 12071426}}

\author{Gao Huang\footnote{ School of Mathematical Science, Zhejiang University, Hangzhou 310027, P. R. China, E-mail address: hgmath@zju.edu.cn}}	
\author{Song Li\footnote{ Corresponding Author, School of Mathematical Science, Zhejiang University, Hangzhou 310027, P. R. China, E-mail address: songli@zju.edu.cn}}	
\date{}
\affil{}
\renewcommand*{\Affilfont}{\small\it}
	
	\maketitle
	\begin{abstract}
	This note demonstrates that we can stably recover all symmetric Toeplitz matrices $\X_0\in\R^{n\times n}$ of rank at most $r$ from a number of rank-one subgaussian measurements on the order of $r\log^{2} n$ with an exponentially decreasing failure probability by employing a nuclear norm minimization program. 
	Our approach utilizes descent cone analysis through Mendelson's small ball method with the Toeplitz constraint.
	The key ingredient is to determine the spectral norm of a random matrix with Toeplitz structure, which may be of independent interest. 
	This improves upon earlier analyses and resolves the conjecture in Chen et al. (IEEE Transactions on Information Theory, 61(7):4034--4059, 2015).

	\end{abstract}		
	{\bf Keywords:} Toeplitz Matrix; Rank-One Measurement; Small Ball Method; Structured Random Matrix.
	
\section{Introduction}

Toeplitz matrices enjoy an important number of applications, including signal processing, numerical analysis, stochastic processes, time series analysis, image processing, and so on \cite{chan1996conjugate}. 
Recovery of a matrix that is simultaneously low-rank and Toeplitz, which arises when a random process is generated by a few spectral spikes,
is crucial for many tasks in wireless communications and array signal processing; see, e.g., \cite{6867345,chen2015exact}. 

Following some ideas in \cite{chen2015exact}, we aim to reconstruct an unknown symmetric matrix $\X_{0}\in\R^{n\times n}$ that is simultaneously low-rank and Toeplitz from a small number of rank-one measurements. 
In particular, we explore sampling methods of the form
\begin{equation*}\label{meas}
		b_{k}=\lg \pp_k\pp_{k}^{\mathsf{T}},\X_{0}\rg+e_{k},\quad k=1,\ldots,m,
		\end{equation*}
where $\pmb{b}:=\left\{b_{k}\right\}_{k=1}^{m}$ denotes the measurements, $\pp_{k}\in\R^{n}$ represents the sensing vector, $\e:=\left\{e_{k}\right\}_{k=1}^{m}$ stands for the noise term, and $m$ is the number of measurements. 
The rank-one sampling scheme is utilized in a wide spectrum of practical settings, 
such as covariance sketching for data streams, noncoherent energy measurements in communications, and phaseless measurements in optical imaging \cite{chen2015exact,cai2015rop}.
Furthermore, it admits near-optimal estimation with tractable algorithms and offers computational and storage advantages over other types of measurements.

Let us formally describe our setup and model, in accordance with \cite{chen2015exact}. 
We focus exclusively on the random sampling model. 
Specifically, we assume that the sensing vectors are independent copies of a random vector $\pp$, whose entries are i.i.d. copies of a random variable $\xi$ with
subgaussian norm $K$\footnote{The subgaussian norm of a random variable $X$ is defined as $\lV X\rV_{\psi_2} := \inf\{ t >0: \E \exp\lk\lv X\rv^2/t^2\rk \le 2\}. $} and satisfy the moment properties:
\begin{equation}\label{mom}
\E\xi=0,\ \E\xi^2=1, \ \text{and}\ \mu:=\E\xi^4\ge1.
	\end{equation}	
We assume that the noise vector $\e$ is bounded in $\ell_p$-norm ($p\ge1$). 
We also define the linear operator $\mA:\R^{n\times n}\rightarrow\R^{m}$ that maps a matrix $\X\in \R^{n\times n}$ to $\left\{\lg \pp_{k}\pp_{k}^{\mathsf{T}},\X\rg\right\}_{k=1}^{m}$. 
Thus, we express the measurements as
 \begin{equation*}
 \pmb{b}=\mA\lk\X_0\rk+\e.
  \end{equation*}
  
A natural heuristic method is rank minimization to promote low-rank structures. 
However, the rank minimization problem is generally considered to be NP-hard.
An alternative scheme is to seek a nuclear norm minimizer over all matrices compatible with the measurements, as well as the Toeplitz constraint:
 \begin{equation} \label{model}
		\begin{array}{ll}
			\text{minimize}  & \quad \lV\X\rV_{*}\\
			\text{subject to} & \quad\lV\mA\lk\X\rk-\pmb{b}\rV_{\ell_p}\le\eta\\	
			                         & \quad \X \ \text{is Toeplitz}.\\			 
		\end{array}
	\end{equation}
Here $\eta> 0$ is an a priori bound for the noise level; that is, we assume $\lV\e\rV_{\ell_p}\le\eta$.

Before we elaborate on our theorem, let's first introduce the previous work \cite{chen2015exact}.
To address the program (\ref{model}), which involves an excess of positive semidefinite cone constraints, Chen et al. in  \cite{chen2015exact} employed the stochastic RIP-$\ell_2/\ell_2$ condition.
They demonstrated that when the noise is bounded in the $\ell_2$-norm, a rank-$r$ symmetric Toeplitz matrix $\X_0\in \R^{n\times n}$
can be stably recovered from a number of rank-one subgaussian measurements on the order of $r\log^{10} n$.
This result was inspired by the Gaussian process argument using chaining techniques described in \cite{rudelson2008sparse}.
Their method can be applied to a broader range of settings—for instance, the recovery of low-rank matrices with random subsampling using a Fourier-type basis, as well as the recovery of matrices that are both low-rank and Toeplitz under i.i.d. Gaussian measurement systems.

Nevertheless, as pointed out in \cite{chen2015exact}, their method has certain limitations when applied to Toeplitz matrices, leading to some unresolved conjectures. 
First, their method has limited applicability for rank-one measurements. 
As $\mu=\E \xi^4\ge\lk\E \xi^2\rk^2=1$, they are only able to provide theoretical guarantees for $1<\mu\le3$;
in other words, the tails of the measurement distributions cannot be heavier than those of the Gaussian distribution (e.g., $\mu=3$ for the Gaussian distribution), and the Bernoulli distribution is also excluded (e.g., $\mu=1$ for the Bernoulli distribution).
The main challenge lies in the fact that rank-one measurements do not satisfy the stochastic RIP-$\ell_2/\ell_2$ condition (see, e.g., \cite{candes2013phaselift}) and require the construction of equivalent sampling operators to meet the assumptions in \cite[Theorem 5]{chen2015exact}.
Secondly, unlike other low-rank matrix recovery problems \cite{chandrasekaran2012convex,kueng2017low}, their method does not achieve recovery with an exponentially decreasing failure probability but instead with a polynomially decreasing one, which is also due to the stochastic RIP-$\ell_2/\ell_2$ condition. 
For these two issues, they conjectured in \cite{chen2015exact} that the result can be improved through other methods
\footnote{In \cite[Remark 2]{chen2015exact}, it is stated that “We conjecture that these two aspects can be improved via other proof techniques.”}.\\

\noindent\textbf{Conjecture.}
Suppose the sample matrices are $\left \{\pp_{k}\pp^{\mathsf{T}}_{k} \right\}_{k=1}^{m}$, where $\pp_k$ are independent copies of a random vector $\pp$, whose entries are i.i.d. copies of a subgaussian random variable $\xi$ satisfying the moment condition (\ref{mom}). 
Then for all $\mu=\E \xi^4\ge1$ and with probability exceeding $1-e^{-cm}$, we can recover low-rank Toeplitz matrices through program (\ref{model}) with a near-optimal sample size, where $c>0$ is a numerical absolute constant.\\

To resolve the aforementioned conjecture, we adopt descent cone analysis via Mendelson’s small ball method, but confine it to the Toeplitz matrix space,
abandoning the stochastic RIP-$\ell_2/\ell_2$ approach.
Over the past decade, a number of works have studied sparse recovery \cite{lecue2017sparse,mendelson2018improved}, low-rank matrix recovery \cite{chandrasekaran2012convex,kueng2017low}, phase retrieval \cite{eldar2014phase,tropp2015convex,krahmer2020complex}, and blind deconvolution \cite{krahmer2021convex} via descent cone analysis.
Descent cone analysis via Gordon’s “escape through a mesh” theorem is suitable for i.i.d. Gaussian measurements \cite{chandrasekaran2012convex}, 
but it suffers from pitfalls when dealing with rank-one measurements.
In contrast, Mendelson’s small ball method \cite{koltchinskii2015bounding} is a powerful strategy for establishing a lower bound on a nonnegative empirical process that can address rank-one subgaussian measurements \cite{kueng2017low,krahmer2020complex}.

Two alternative approaches for rank-one measurements are the rank RIP-$\ell_1/\ell_2$ condition \cite{chen2015exact,cai2015rop} and dual certificate analysis \cite{candes2013phaselift}. 
However, the bottleneck of the former lies in establishing a precise upper bound on the covering number of all Toeplitz matrices of rank $r$, see e.g., \cite{chen2015exact};
while the latter resides in constructing dual elements, and the number of measurements is typically not optimal with respect to $r$, see e.g., \cite{6867345}.
The small ball method confined to the Toeplitz matrix space will achieve a more effective solution than either of these strategies.
The following theorem provides a resolution to the conjecture.

\begin{theorem}\label{thm}
Let $p\ge1$ and $\lV\e\rV_{\ell_p}\le \eta$. 
Suppose the sample matrices $\left \{\pp_{k}\pp^{\mathsf{T}}_{k} \right\}_{k=1}^{m}$ are such that
$\pp_k$ are independent copies of a random vector $\pp$, whose entries are i.i.d. copies of a random variable $\xi$ with subgaussian norm $K$ and  satisfy the moment condition (\ref{mom}). 
For all $\mu\ge1$, with probability exceeding $1-e^{-c m}$,  the solution $\widehat{\X}$ to (\ref{model}) satisfies
   \begin{equation}\label{model sulution}
	        \lV\widehat{\X}-\X_{0}\rV_{F} \le C \frac{\eta}{m^{1/p}},
	         \end{equation}	
		simultaneously for all symmetric Toeplitz matrices $\X_0$ of rank at most $r$, 
		provided that $m\ge Lr\log^{2}n$.
	Here, $c,C$ and $L$ are constants depending only on $K$ and $\mu$.	
	\end{theorem}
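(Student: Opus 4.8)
The plan is to combine a descent-cone analysis with Mendelson's small-ball method, carried out entirely inside the subspace of symmetric Toeplitz matrices. First I would reduce recovery to a uniform minimum-gain bound. Write $\Z=\widehat\X-\X_0$. Since $\y=\mA(\X_0)+\e$ with $\lV\e\rV_{\ell_p}\le\eta$, the matrix $\X_0$ is feasible for $(\ref{model})$, so optimality of $\widehat\X$ gives $\lV\widehat\X\rV_*\le\lV\X_0\rV_*$ and hence $\Z$ lies in the descent cone of the nuclear norm at $\X_0$, on which the standard estimate $\lV\Z\rV_*\le2\sqrt{2r}\,\lV\Z\rV_F$ holds (split $\Z$ along the row/column support of $\X_0$: the complementary block has no larger nuclear norm, and what remains has rank at most $2r$). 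Since $\widehat\X,\X_0$ are symmetric Toeplitz, $\Z$ also lies in the subspace $\mT\subset\R^{n\times n}$ of symmetric Toeplitz matrices, and $\lV\mA(\Z)\rV_{\ell_p}\le\lV\mA(\widehat\X)-\y\rV_{\ell_p}+\lV\e\rV_{\ell_p}\le2\eta$. Thus the normalized error lies in the $\X_0$-independent set $E:=\{\Z\in\mT:\lV\Z\rV_F=1,\ \lV\Z\rV_*\le2\sqrt{2r}\}$, and, using $\lV v\rV_{\ell_p}\ge m^{1/p-1}\lV v\rV_{\ell_1}$ for $p\ge1$, it is enough to show that, with probability at least $1-e^{-cm}$,
\[
\inf_{\Z\in E}\ \frac1m\sum_{k=1}^m\bigl|\lg\pp_k\pp_k^{\mathsf{T}},\Z\rg\bigr|\ \ge\ c_0 ,
\]
since then $c_0\,m^{1/p}\lV\Z\rV_F\le\lV\mA(\Z)\rV_{\ell_p}\le2\eta$, which is $(\ref{model sulution})$ with $C=2/c_0$, and it holds for all rank-$r$ symmetric Toeplitz $\X_0$ at once because $E$ does not depend on $\X_0$.

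To obtain this bound I would invoke Mendelson's small-ball inequality for nonnegative empirical processes: for every $\zeta,u>0$, with probability at least $1-e^{-u^2/2}$,
\[
\inf_{\Z\in E}\frac1m\sum_{k=1}^m\bigl|\lg\pp_k\pp_k^{\mathsf{T}},\Z\rg\bigr|\ \ge\ \zeta\,Q_{2\zeta}(E)\ -\ \frac2m\,\mathcal W_m(E)\ -\ \frac{\zeta u}{\sqrt m},
\]
where $Q_{2\zeta}(E)=\inf_{\Z\in E}\Pr(|\lg\pp\pp^{\mathsf{T}},\Z\rg|\ge2\zeta)$ is the marginal small-ball function and $\mathcal W_m(E)=\E\sup_{\Z\in E}\lg\sum_{k=1}^m\varepsilon_k\pp_k\pp_k^{\mathsf{T}},\Z\rg$ is the mean empirical width ($\varepsilon_k$ independent Rademacher signs). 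So I need \textit{(a)} $Q_{2\zeta}(E)\ge q$ for a well-chosen absolute $\zeta$ and some $q=q(K,\mu)>0$, and \textit{(b)} $\mathcal W_m(E)\le\tfrac18\zeta q\,m$; then $u=\tfrac14 q\sqrt m$ gives the display above with $c_0=\tfrac12\zeta q$ and failure probability $e^{-q^2m/32}$. It is the \emph{exponential} tail $e^{-u^2/2}$ built into Mendelson's bound that upgrades the merely polynomial probability of the stochastic RIP approach of \cite{chen2015exact}, settling the second part of the conjecture.

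For \textit{(a)}, fix $\Z\in E$ with constant diagonal entry $z_0$. Using $\trace\Z=nz_0$, $\sum_iZ_{ii}^2=nz_0^2$ and $\E\xi^2=1$, $\E\xi^4=\mu$, a direct computation gives
\[
\E\bigl(\pp^{\mathsf{T}}\Z\pp\bigr)^2=(\trace\Z)^2+2\lV\Z\rV_F^2+(\mu-3)\sum_iZ_{ii}^2=z_0^2\,n\,(n+\mu-3)+2\ \ge\ 2
\]
for all $n\ge3$ and $\mu\ge1$ (then $n+\mu-3\ge1$). This is exactly where the Toeplitz structure is decisive: for a \emph{general} symmetric matrix the quantity degenerates at $\mu=1$ to $(\trace\Z)^2+2\sum_{i\ne j}Z_{ij}^2$, which vanishes for a traceless diagonal $\Z$ (for which $\pp^{\mathsf{T}}\Z\pp\equiv0$ under the Bernoulli measure) --- precisely the obstruction at $\mu=1$ pointed out in \cite{chen2015exact} --- whereas the forced constant diagonal rules out all such directions. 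Since $\pp^{\mathsf{T}}\Z\pp$ is a quadratic form in the subgaussian($K$) vector $\pp$, the Hanson--Wright inequality together with the lower bound just obtained gives $\E(\pp^{\mathsf{T}}\Z\pp)^4\le C(K)\lk\E(\pp^{\mathsf{T}}\Z\pp)^2\rk^2$, and Paley--Zygmund then yields $\Pr(|\pp^{\mathsf{T}}\Z\pp|\ge1)\ge\tfrac1{4C(K)}$, so $Q_{2\zeta}(E)\ge q$ with $\zeta=\tfrac12$ and $q=q(K)$ for all $\mu\ge1$ (small $n$ being trivial). This settles the first part of the conjecture.

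The hard part is \textit{(b)}. Let $P_\mT$ denote the orthogonal projection onto $\mT$ (self-adjoint, sending $\pp\pp^{\mathsf{T}}$ to the symmetric Toeplitz matrix obtained by averaging $\pp\pp^{\mathsf{T}}$ along its diagonals); for $\Z\in E$, duality of the spectral norm $\lV\cdot\rV$ and the nuclear norm $\lV\cdot\rV_*$ together with $\lV\Z\rV_*\le2\sqrt{2r}$ give
\[
\lg\sum_k\varepsilon_k\pp_k\pp_k^{\mathsf{T}},\Z\rg=\lg P_\mT\lk\sum_k\varepsilon_k\pp_k\pp_k^{\mathsf{T}}\rk,\Z\rg\le\lV\sum_{k=1}^m\varepsilon_k\,P_\mT(\pp_k\pp_k^{\mathsf{T}})\rV\lV\Z\rV_*\le2\sqrt{2r}\,\lV\sum_{k=1}^m\varepsilon_k\,P_\mT(\pp_k\pp_k^{\mathsf{T}})\rV ,
\]
so $\mathcal W_m(E)\le2\sqrt{2r}\,\E\lV\sum_{k=1}^m\varepsilon_k P_\mT(\pp_k\pp_k^{\mathsf{T}})\rV$, and everything reduces to the estimate
\[
\E\,\lV\sum_{k=1}^m\varepsilon_k\,P_\mT(\pp_k\pp_k^{\mathsf{T}})\rV\ \le\ C'(K)\bigl(\sqrt m\,\log n+\log^2n\bigr) ,
\]
which is the ``spectral norm of the random matrix of the Toeplitz structure'' highlighted in the abstract. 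I would prove it with a matrix Bernstein / noncommutative Khintchine bound for the independent, mean-zero, self-adjoint summands $\varepsilon_kP_\mT(\pp_k\pp_k^{\mathsf{T}})$, whose two inputs are \textit{(i)} a matrix-variance bound $\lV\E[P_\mT(\pp\pp^{\mathsf{T}})^2]\rV\le C(K)\log n$ --- obtained by writing this explicit $n\times n$ matrix out, observing that its off-diagonal mass sits on only a handful of entries per row, and applying a Gershgorin/row-sum argument --- and \textit{(ii)} a high-probability per-summand bound $\lV P_\mT(\pp\pp^{\mathsf{T}})\rV\le C(K)\log n$, i.e. control of the supremum over the torus of the random symbol $\theta\mapsto\sum_kc_k(\pp)\,e^{\mathrm{i}k\theta}$ with $c_k(\pp)=(n-|k|)^{-1}\sum_{i-j=k}\xi_i\xi_j$, via a $\delta$-net on the torus, Hanson--Wright tails at each frequency, and a union bound (a Salem--Zygmund-type estimate). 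Granting this, $\mathcal W_m(E)\le C''(K)\sqrt r(\sqrt m\log n+\log^2n)\le\tfrac18\zeta q\,m$ exactly when $m\ge Lr\log^2n$, which closes \textit{(b)} and finishes the proof. I expect the bulk of the work to sit here: the reduction and the small-ball step are routine descent-cone machinery and a short moment computation, but replacing the trivial bound $\lV\pp\pp^{\mathsf{T}}\rV=\lV\pp\rV_2^2\asymp n$ by $\lV P_\mT(\pp\pp^{\mathsf{T}})\rV\asymp\log n$ --- with the right powers of $\log n$ in both the deterministic symbol estimate and the concentration step --- is what converts the generic rank-one rate $m\gtrsim rn$ into the Toeplitz rate $m\gtrsim r\log^2n$.
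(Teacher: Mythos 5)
Your overall architecture is exactly the paper's: reduce to a uniform lower bound on the minimum conic singular value over an $\X_0$-independent set of Toeplitz matrices with small nuclear-to-Frobenius ratio, apply Mendelson's small-ball inequality, bound the small-ball function via the second/fourth moment computation plus Paley--Zygmund (your observation that the forced constant diagonal makes $\E(\pp^{\mathsf{T}}\Z\pp)^2\ge 2$ for all $\mu\ge1$ is precisely the paper's Theorem 2 and is what kills the $\mu\le3$ restriction), and control the mean empirical width by the spectral norm of $\sum_k\varepsilon_k\mT(\pp_k\pp_k^{\mathsf{T}})$ via nuclear/spectral duality. Where you genuinely diverge is the third ingredient. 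You propose matrix Bernstein with a matrix-variance bound $\lV\E[\mT(\pp\pp^{\mathsf{T}})^2]\rV\lesssim\log n$ and a per-summand symbol estimate over the torus via a net and Hanson--Wright; the paper instead embeds the Toeplitz matrix into a $(2n-1)\times(2n-1)$ circulant, writes its eigenvalues explicitly as $\lambda_j=z_0+2\sum_\ell z_\ell\cos(2\pi j\ell/(2n-1))$, applies the scalar Hanson--Wright inequality to each $\lambda_j$ viewed as a quadratic form in the full $mn$-dimensional sample vector (with coefficient matrix of Frobenius norm $\lesssim\sqrt{m\log n}$), and union-bounds over the $2n-1$ discrete frequencies. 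The paper's route is more elementary --- it needs no matrix concentration inequality and no net, since the circulant eigenvalues live at exactly $2n-1$ points --- and it yields the slightly better lower-order term $\log^{3/2}n$ versus your $\log^{2}n$ (immaterial for the final sample complexity $m\gtrsim r\log^2 n$). Your route is workable but has one step you gloss over: the summands $\varepsilon_k\mT(\pp_k\pp_k^{\mathsf{T}})$ are not almost surely bounded, so the standard matrix Bernstein inequality does not apply directly; you would need a truncation argument or an Orlicz-norm (subexponential) variant, which typically costs an extra logarithmic factor in the lower-order term. Note also that the circulant viewpoint would let you replace your torus net in step (ii) by the same finite union bound, which is exactly the simplification the paper exploits.
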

\begin{remark}
\cite[Theorem 2]{chen2015exact} is applicable to the model that includes positive semidefinite cone constraints; as mentioned in \cite{chen2015exact}, our result is also suitable for this scenario.
\end{remark}	

Our theorem improves the analyses in \cite{chen2015exact} and provides a positive answer to the aforementioned conjecture.
Firstly, our result applies to all subgaussian measurements where $\mu\ge1$, not just $1<\mu\le3$.	
Note that Bernoulli measurements are also applicable, which is different from the fact that in phase retrieval (when $\X_0=\x_0\x_{0}^{\mathsf{T}}$ is rank-1),
it is impossible to distinguish between vectors $\pmb{e}_{1}$ and $\pmb{e}_{2}$ from Bernoulli measurements; see, e.g., \cite{krahmer2017phase,krahmer2020complex}.
This is due to the fact that $\pmb{e}_{1}\pmb{e}^{\mathsf{T}}_{1}$ and $\pmb{e}_{2}\pmb{e}^{\mathsf{T}}_{2}$ correspond to the same Toeplitz matrix such that
$\mT\lk\pmb{e}_{1}\pmb{e}^{\mathsf{T}}_{1}\rk=\mT\lk\pmb{e}_{2}\pmb{e}^{\mathsf{T}}_{2}\rk=\frac{1}{n}\pmb{I}$,
where $\mT$ is the orthogonal projection operator onto Toeplitz matrices.
Aside from that, the result stabilizes with an exponentially decreasing failure probability. 
Since $m\ge L r \log^2 n$, the failure probability satisfies  $e^{-c m}\le\exp\lk-cLr\log^2n\rk$, which is super-polynomially small in $n$ for fixed $r$. 
Thus, an exponentially decreasing failure probability is significant.
Besides, the sampling complexity in \cite[Theorem 2]{chen2015exact} requires $m=\mO\lk r\log^{10} n\rk$, which is optimal for the rank constraint 
$r$ but remains far from the optimal sampling order.
Our result improves the number of measurements from $m=\mO\lk r\log^{10} n\rk$ to $m=\mO\lk r\log^{2} n\rk$.
Furthermore, our result has the capability to address a broader range of $\ell_p$-norm bounded noise.

Our proof consists of three ingredients. 
The first ingredient is that the descent cone for the nuclear norm has a clear characterization.
The second is that the small ball function in the small ball method, when confined to Toeplitz matrices, is bounded away from zero for rank-one subgaussian measurements.
Finally, we provide an upper bound for the spectral norm of a random matrix with Toeplitz structure.
Our approach relies on embedding Toeplitz matrices into circulant matrices, allowing for the explicit expression of the latter's eigenvalues, 
which diverges from the moment method \cite{bandeira2023matrix} and the random process method \cite{vershynin2018high}, offering an alternative perspective that may be of independent interest.

 \section{Proofs}
We prove our results in this section.
Before we begin, we introduce some notation. 
The symmetric Toeplitz matrix considered in this paper refers to the following form
\begin{equation*}
 \left(\begin{array}{cccc}
 z_0&z_1&\cdots & z_{n-1}\\
 z_{1}& \ddots&\ddots&\vdots\\
\vdots &  \ddots &\ddots&z_1\\
 z_{n-1}&\cdots&z_1&z_{0}\\ 
\end{array}\right).
\end{equation*}

We recall that we assume the sample vectors $\{\pp_{k}\}_{k=1}^{m}$ in sample matrices $\{\pp_{k}\pp_{k}^{\mathsf{T}}\}_{k=1}^{m}$ are independent copies of a random vector $\pp$. 
Specifically, the entries of $\pp$ are i.i.d. copies of a random variable $\xi$ with the following properties:
\begin{equation*}
\E\xi=0,\E\xi^2=1, \E\xi^4=\mu\ge1\ \text{and}\ \lV\xi\rV_{\psi_{2}}\le K.
	\end{equation*}

For a vector $\pmb{x}$, $\lV\x\rV_{\ell_p}$ denotes the usual $\ell_p$-norm. For a rank-$r$ matrix $\X$, we use $\lV\X\rV_{*}$ to denote its nuclear norm, $\lV\X\rV_{F}$ to denote its Frobenius norm and $\lV\X\rV_{op}$
to denote its spectral norm.
We denote its corresponding singular value decomposition (SVD) by $\X=\pmb{U}\pmb{\Sigma}\pmb{V}^{*}$,
where $\pmb{\Sigma}\in\R^{r \times r}$ is a diagonal matrix with nonnegative entries and $\pmb{U},\pmb{V}\in\R^{n\times r}$ satisfy 
$\pmb{U}^{*}\pmb{U}=\pmb{V}^{*}\pmb{V}=\pmb{I}_{r}$.
Then the tangent space of the manifold of rank-$r$ matrices at $\X$ is defined by
\begin{equation*}
T_{\X}:=\left\{\pmb{U}\pmb{A}^{*}+\pmb{B}\pmb{V}^{*}:\pmb{A},\pmb{B}\in\R^{n\times r}\right\}.
	\end{equation*}
We denote the orthogonal complement to $T_{\X}$ as $T_{\X}^{\perp}$.
In the following, we use $\Z_{T_{\X}}$ to denote the orthogonal projection of $\Z$ onto $T_{\X}$ and $\Z_{T^{\perp}_{\X}}$ to denote the orthogonal projection of $\Z$ onto $T^{\perp}_{\X}$.
In addition, $\mathbb{S}_{F}$ denotes the Frobenius unit sphere of $\R^{n\times n}$,
$\mathcal{S}^{n}$ denotes the vector space of all symmetric matrices in $\R^{n\times n}$,
$\mT$ denotes the orthogonal projection operator onto Toeplitz matrices.

\subsection{Preliminaries}
We present the Hanson-Wright inequality, which gives a concentration bound for quadratic forms of random variables;
for instance, see \cite[Theorem 6.2.1]{vershynin2018high}.
     \begin{lemma}\label{hanson}
      Let $\pp=\lk\xi_{1},\ldots,\xi_{n}\rk^{\mathsf{T}}$ 
      be a random vector with independent components satisfying $\E\xi_i=0$, $\E\xi_i^2=1$, and $\max_i\lV\xi_i\rV_{\psi_{2}}\le K$. 
      Then for any $\Z\in\mathcal{S}^{n}$, there is a numerical constant $c>0$ such that for all $t>0$ 
		\begin{equation}
	\mathbb{P} \lk \lv  \pmb{\xi}^{\mathsf{T}}\Z\pmb{\xi} - \mathbb{E}\lz \pmb{\xi}^{\mathsf{T}}\Z\pmb{\xi} \rz\rv > t   \rk 
	\le 2  \exp \lk-c\min \left\{ \frac{t^{2}}{K^{4} \lV\Z\rV^{2}_{F}}  , \frac{t}{K^{2} \lV \Z \rV_{op}}  \right\} \rk.	
	\end{equation}
	\end{lemma}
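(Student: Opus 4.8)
The plan is to control the moment generating function (MGF) of the centred quadratic form $S := \pp^{\mathsf{T}}\Z\pp - \E\lz\pp^{\mathsf{T}}\Z\pp\rz$ and then apply a Chernoff bound, optimising the exponential parameter to recover the two-regime (subgaussian/subexponential) tail. Since $\E\xi=0$ and $\E\xi^2=1$, we have $\E\lz\pp^{\mathsf{T}}\Z\pp\rz=\sum_i Z_{ii}=\trace\Z$, so I would first split
\begin{eqnarray*}
S=\underbrace{\sum_{i} Z_{ii}\lk\xi_i^2-1\rk}_{=:S_{\mathrm{diag}}}+\underbrace{\sum_{i\ne j} Z_{ij}\,\xi_i\xi_j}_{=:S_{\mathrm{off}}},
\end{eqnarray*}
and bound $\mathbb{P}(\lv S_{\mathrm{diag}}\rv>t/2)$ and $\mathbb{P}(\lv S_{\mathrm{off}}\rv>t/2)$ separately, combining them by a union bound. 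Both pieces must produce the same exponent $\min\{t^2/(K^4\lV\Z\rV_F^2),\,t/(K^2\lV\Z\rV_{op})\}$.

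For $S_{\mathrm{diag}}$: since $\lV\xi\rV_{\psi_2}\le K$, each $\xi_i^2-1$ is centred and subexponential with $\lV\xi_i^2-1\rV_{\psi_1}\lesssim K^2$, and these are independent across $i$. I would apply Bernstein's inequality for sums of independent subexponential variables, which yields precisely the two-regime bound with the variance proxy $\sum_i Z_{ii}^2\le\lV\Z\rV_F^2$ and the uniform bound $\max_i\lv Z_{ii}\rv\le\lV\Z\rV_{op}$.

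The off-diagonal term $S_{\mathrm{off}}$ is the crux, and I would bound $\E\exp(\lambda S_{\mathrm{off}})$ in three steps. (i) \emph{Decoupling}: by a standard decoupling inequality applied to the convex map $x\mapsto e^{\lambda x}$ (using $\E\xi=0$), $\E\exp(\lambda S_{\mathrm{off}})\le\E\exp\lk 4\lambda\,\pp^{\mathsf{T}}\Z\pp'\rk$, where $\pp'$ is an independent copy of $\pp$. (ii) \emph{Conditional subgaussianity}: conditioning on $\pp'$, the sum $\pp^{\mathsf{T}}\Z\pp'=\sum_i\xi_i(\Z\pp')_i$ is a weighted sum of independent subgaussian variables, so $\E_{\pp}\exp(4\lambda\,\pp^{\mathsf{T}}\Z\pp')\le\exp\lk c\lambda^2K^2\lV\Z\pp'\rV_2^2\rk$; taking the outer expectation leaves the quadratic form $\E_{\pp'}\exp(c\lambda^2K^2(\pp')^{\mathsf{T}}\Z^2\pp')$ to be controlled (here $\Z^{\mathsf{T}}\Z=\Z^2$ as $\Z\in\mathcal{S}^{n}$). (iii) \emph{Gaussian lift}: writing $\exp(s\lV\Z\pp'\rV_2^2)=\E_{\g}\exp(\sqrt{2s}\,\g^{\mathsf{T}}\Z\pp')$ for a standard Gaussian $\g$, interchanging expectations, and applying conditional subgaussianity in $\pp'$, I reduce to $\E_{\g}\exp(c'sK^2\,\g^{\mathsf{T}}\Z^2\g)$, a genuine Gaussian chaos whose MGF equals $\prod_k(1-2c'sK^2\lambda_k(\Z^2))^{-1/2}$. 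For $s=c\lambda^2K^2$ and $\lv\lambda\rv\le c''/(K^2\lV\Z\rV_{op})$ the eigenvalue condition holds, and the product is bounded by $\exp(C\lambda^2K^4\lV\Z\rV_F^2)$ since $\trace\Z^2=\lV\Z\rV_F^2$.

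Combining the three steps gives $\E\exp(\lambda S_{\mathrm{off}})\le\exp(C\lambda^2K^4\lV\Z\rV_F^2)$ for $\lv\lambda\rv\le c''/(K^2\lV\Z\rV_{op})$. A Chernoff bound $\mathbb{P}(S_{\mathrm{off}}>t/2)\le\exp(-\lambda t/2+C\lambda^2K^4\lV\Z\rV_F^2)$ optimised over the admissible range of $\lambda$ produces the subgaussian exponent $-ct^2/(K^4\lV\Z\rV_F^2)$ when the unconstrained optimiser lies in range (small $t$) and the subexponential exponent $-ct/(K^2\lV\Z\rV_{op})$ at the boundary otherwise; the tail of $-S_{\mathrm{off}}$ is identical by symmetry. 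Adding the diagonal and off-diagonal tails and absorbing constants yields the stated inequality. I expect step (iii)—lifting the residual subgaussian quadratic form $\lV\Z\pp'\rV_2^2$ to an exactly computable Gaussian chaos—to be the main technical obstacle, since it is precisely the maneuver that converts the entrywise subgaussian hypothesis into the Frobenius- and operator-norm geometry of $\Z$ appearing in the two regimes.
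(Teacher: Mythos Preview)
Your proof sketch is correct and follows the standard Rudelson--Vershynin route to the Hanson--Wright inequality: diagonal/off-diagonal split, Bernstein on the diagonal, decoupling plus a Gaussian lift on the off-diagonal, and Chernoff optimisation to recover the two-regime tail. The details you outline (in particular the identity $\exp(s\lV\Z\pp'\rV_2^2)=\E_{\g}\exp(\sqrt{2s}\,\g^{\mathsf{T}}\Z\pp')$, the subsequent swap of expectations, and the explicit Gaussian chaos MGF $\prod_k(1-2t\lambda_k(\Z^2))^{-1/2}$) all go through as stated.

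There is nothing to compare against, however: the paper does \emph{not} prove Lemma~\ref{hanson}. It is stated in the Preliminaries subsection as a known concentration result and referenced to Vershynin's book \cite{vershynin2018high}; the lemma is then used as a black box in the proofs of Theorem~\ref{small} (Fact~2) and Theorem~\ref{toep} (Step~3). So your proposal is not an alternative to the paper's argument but rather a reconstruction of the cited external result, and what you have written is essentially the textbook proof.
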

	
	We next present Mendelson's small ball method, which can provide a lower bound for nonnegative empirical processes
that can address a broad class of complexity measures and sampling matrices, especially for heavy-tailed measurements.
	The proof can be found in \cite[Theorem 2.1]{koltchinskii2015bounding} or \cite[Proposition 5.1]{tropp2015convex}.
	\begin{lemma}\label{small ball}
	Fix $\mK\subset\R^{n}$ and
	let $\{\pmb{\phi}_k\}_{k=1}^{m}$ be independent copies of a random vector $\pmb{\phi}$ in $\R^{n}$. 
	Consider the small ball function
    \begin{equation}	     
     \mathcal{Q}_{\alpha}\lk \mK;\pmb{\phi}\rk=\inf_{\xu\in\mK}\mathbb{P}\lk\lv \lg\pmb{\phi},\xu\rg\rv\ge\alpha\rk 
	         \end{equation}
	         and the supremum of the empirical process
	          \begin{equation}	    
	            \mathcal{R}_{m}\lk \mK;\pmb{\phi}\rk
	            =\E\sup_{\xu\in\mK}\lv\frac{1}{m}\sum_{k=1}^{m}\varepsilon_{k}\lg\pmb{\phi}_k,\xu\rg\rv,
	      	         \end{equation} 
where $\{\varepsilon_{k}\}_{k=1}^{m}$ is a Rademacher sequence independent of everything else.        
		         
		         Then for any $p\ge1,\alpha>0$ and $t> 0$, with probability exceeding $1-\exp\lk-2t^{2}\rk$,
	  \begin{equation}
\inf_{\xu\in\mK}\lk\sum_{k=1}^{m}\lv\lg\pmb{\phi}_k,\xu\rg\rv^{p}\rk^{1/p} 
	\ge
	m^{1/p}\lk\alpha\mathcal{Q}_{2\alpha}\lk \mK;\pmb{\phi}\rk-2\mathcal{R}_{m}\lk \mK;\pmb{\phi}\rk-\frac{\alpha t}{\sqrt{m}}\rk.
	    \end{equation}     
\end{lemma}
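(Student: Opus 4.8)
The plan is to reduce the general $\ell_p$ statement to the case $p=1$ and then execute the three standard moves of Mendelson's method: replace each linear functional by a truncated surrogate, split the resulting empirical average into a population term and a centered fluctuation, and control the fluctuation through bounded-difference concentration together with symmetrization and the contraction principle. First I would dispose of the exponent. For any $\xu\in\mK$ and $p\ge1$, the power-mean inequality gives $\lk\frac1m\sum_{k}\lv\lg\pmb{\phi}_{k},\xu\rg\rv^{p}\rk^{1/p}\ge\frac1m\sum_{k}\lv\lg\pmb{\phi}_{k},\xu\rg\rv$, so that $\lk\sum_{k}\lv\lg\pmb{\phi}_{k},\xu\rg\rv^{p}\rk^{1/p}\ge m^{1/p}\cdot\frac1m\sum_{k}\lv\lg\pmb{\phi}_{k},\xu\rg\rv$. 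Taking the infimum over $\mK$, it therefore suffices to establish $\inf_{\xu\in\mK}\frac1m\sum_{k}\lv\lg\pmb{\phi}_{k},\xu\rg\rv\ge\alpha\mathcal{Q}_{\alpha}-2\mathcal{R}_{m}-\alpha t/\sqrt m$ on the stated event, after which the factor $m^{1/p}$ reappears verbatim.

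Next I would introduce the truncated surrogate $\psi_{\alpha}(s)=\min\{\lv s\rv/\alpha,\,1\}$, chosen so that three properties hold \emph{simultaneously at the single scale} $\alpha$: one has $\alpha\psi_{\alpha}(s)\le\lv s\rv$ (so it lower-bounds the modulus), $\psi_{\alpha}(s)\ge\mathbbm{1}[\lv s\rv\ge\alpha]$ (so its mean dominates the tail probability), and $\psi_{\alpha}$ is $(1/\alpha)$-Lipschitz with $\psi_{\alpha}(0)=0$ (so the contraction principle will apply). Writing $f_{\xu}(\pmb{\phi})=\psi_{\alpha}(\lg\pmb{\phi},\xu\rg)$, the first property yields $\frac1m\sum_{k}\lv\lg\pmb{\phi}_{k},\xu\rg\rv\ge\frac{\alpha}m\sum_{k}f_{\xu}(\pmb{\phi}_{k})$, and subtracting and adding the mean gives $\frac1m\sum_{k}f_{\xu}(\pmb{\phi}_{k})\ge\E f_{\xu}-\Delta$, where $\Delta=\sup_{\xu\in\mK}\lv\frac1m\sum_{k}\lk f_{\xu}(\pmb{\phi}_{k})-\E f_{\xu}\rk\rv$. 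By the second property $\E f_{\xu}\ge\mathbb{P}\lk\lv\lg\pmb{\phi},\xu\rg\rv\ge\alpha\rk\ge\mathcal{Q}_{\alpha}$, so $\inf_{\xu\in\mK}\frac1m\sum_{k}\lv\lg\pmb{\phi}_{k},\xu\rg\rv\ge\alpha\lk\mathcal{Q}_{\alpha}-\Delta\rk$.

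It then remains to bound $\Delta$. Since each $f_{\xu}$ takes values in $[0,1]$, replacing a single $\pmb{\phi}_{j}$ alters every empirical average, and hence $\Delta$, by at most $1/m$; the bounded-difference (McDiarmid) inequality with $\sum_{j}(1/m)^{2}=1/m$ therefore gives $\Delta\le\E\Delta+t/\sqrt m$ with probability at least $1-e^{-2t^{2}}$, which is precisely where the stated failure probability and the $\alpha t/\sqrt m$ term originate. For the mean, the standard symmetrization inequality gives $\E\Delta\le2\,\E\sup_{\xu\in\mK}\lv\frac1m\sum_{k}\varepsilon_{k}f_{\xu}(\pmb{\phi}_{k})\rv$, and the third property lets me invoke the Ledoux--Talagrand contraction principle to strip off the $(1/\alpha)$-Lipschitz map $\psi_{\alpha}$, leaving $\E\Delta\le\frac2{\alpha}\,\E\sup_{\xu\in\mK}\lv\frac1m\sum_{k}\varepsilon_{k}\lg\pmb{\phi}_{k},\xu\rg\rv=\frac2{\alpha}\mathcal{R}_{m}$. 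Multiplying the preceding display by $\alpha$ and inserting these two bounds yields $\inf_{\xu\in\mK}\frac1m\sum_{k}\lv\lg\pmb{\phi}_{k},\xu\rg\rv\ge\alpha\mathcal{Q}_{\alpha}-2\mathcal{R}_{m}-\alpha t/\sqrt m$, which is the $p=1$ bound.

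The step I expect to be the main obstacle, and the only genuinely non-routine choice, is the passage from the linear functional to the Rademacher width $\mathcal{R}_{m}$: the object one is naturally led to, the hard indicator $\mathbbm{1}[\lv\lg\pmb{\phi},\xu\rg\rv\ge\alpha]$, is not Lipschitz, so the contraction principle cannot be applied to it, and the sign pattern in the symmetrized sum forbids simply dominating it by a Lipschitz function pointwise. The surrogate $\psi_{\alpha}$ is what resolves this, and it must be engineered to carry all three roles at once at the single threshold $\alpha$; any loosening of this design forces a second threshold and degrades $\mathcal{Q}_{\alpha}$ to $\mathcal{Q}_{2\alpha}$. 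The remaining ingredients---the power-mean reduction, symmetrization, and McDiarmid---are standard, and with this choice the constants (the factor $2$ multiplying $\mathcal{R}_{m}$ and the factor $2$ in the exponent) come out exactly as stated.
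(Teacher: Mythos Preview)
Your proof is correct and follows exactly the standard argument of Mendelson's small ball method. The paper does not actually supply its own proof of this lemma; it simply cites \cite{koltchinskii2015bounding} (Theorem~2.1) and \cite{tropp2015convex} (Proposition~5.1), and what you have written is precisely the proof contained in those references: the power-mean reduction to $p=1$, the soft indicator $\psi_{\alpha}(s)=\min\{\lvert s\rvert/\alpha,1\}$, bounded differences for the concentration term, and symmetrization plus Ledoux--Talagrand contraction for the expected supremum. One small remark on constants: because you defined $\Delta$ with absolute values, the contraction step in the form of Ledoux--Talagrand Theorem~4.12 carries an extra factor of $2$, which would give $4\mathcal{R}_{m}$ rather than $2\mathcal{R}_{m}$; to get the constant exactly as stated you should use the one-sided deviation $\Delta'=\sup_{\xu\in\mK}\bigl(\E f_{\xu}-\frac1m\sum_{k}f_{\xu}(\pmb{\phi}_{k})\bigr)$, for which the contraction principle applies with constant $1$ and the final bound is $\alpha\mathcal{Q}_{\alpha}-2\mathcal{R}_{m}-\alpha t/\sqrt{m}$ on the nose.
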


 \subsection{Descent Cone Analysis}
 In this subsection, we present a framework for applying the descent cone analysis method and a detailed examination of the small ball method as it pertains to Toeplitz matrices.
 
The descent cone of a norm at a point $\X_0\in \R^{n\times n}$ is the set of all possible directions $\Z\in\R^{n\times n}$
such that the norm does not increase. 
For the nuclear norm, this leads to the following definition.
 \begin{definition}
 For any matrix $\X_0\in \R^{n\times n}$, define its descent cone $\mD\lk \X_0\rk$ by
     \begin{equation}	     
     \mD\lk \X_0\rk:=\left\{\Z\in\R^{n\times n}:\lV\X_0+t\Z\rV_{*}\le\lV\X_0\rV_{*} \text{for some}\ t>0\right\}.
     	         \end{equation}
 \end{definition}
 
The first crucial ingredient for Theorem \ref{thm} is the following proposition that characterizes the geometry property of the descent cone. 
The proof is inspired by \cite[Lemma 4.1]{krahmer2021convex}, and it can serve as a substitute for \cite[Lemma 10]{kueng2017low}. 
\begin{proposition}\label{des}
Let $\X_0\in\mathcal{S}^{n}$ be of rank at most $r$. Then
\begin{equation}
\lV\Z\rV_{*}\le\lk\sqrt{2}+1\rk\sqrt{r} \lV\Z\rV_{F},\ \text{for all}\ \Z\in\overline{\mD\lk \X_0\rk},
	    \end{equation}  
	    where $\overline{\mD\lk \X_0\rk}$ denotes the topological closure of $\mD\lk \X_0\rk$.
\end{proposition}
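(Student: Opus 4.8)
The plan is to bound $\lV\Z\rV_*$ for any $\Z$ in the descent cone $\mD(\X_0)$, then pass to the closure by continuity of both norms. Fix a rank-$r$ symmetric matrix $\X_0$ with SVD $\X_0=\pmb{U}\pmb{\Sigma}\pmb{V}^*$, and let $T:=T_{\X_0}$ be its tangent space, with $T^\perp$ its orthogonal complement. The standard subdifferential characterization of the nuclear norm says that for $\Z\in\mD(\X_0)$ one has, by convexity, $\lV\X_0+t\Z\rV_*\ge\lV\X_0\rV_*+t\,\lg\pmb{W},\Z\rg$ for any $\pmb{W}\in\partial\lV\X_0\rV_*$; choosing $\pmb{W}=\pmb{U}\pmb{V}^*+\pmb{E}$ with $\pmb{E}\in T^\perp$, $\lV\pmb{E}\rV_{op}\le 1$ optimally aligned with $\Z_{T^\perp}$ gives $0\ge t\lk\lg\pmb{U}\pmb{V}^*,\Z_T\rg+\lV\Z_{T^\perp}\rV_*\rk$ for some $t\tge0$, hence $\lV\Z_{T^\perp}\rV_*\le-\lg\pmb{U}\pmb{V}^*,\Z_T\rg\le\lV\pmb{U}\pmb{V}^*\rV_F\lV\Z_T\rV_F=\sqrt{r}\,\lV\Z_T\rV_F$.

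From here the triangle inequality gives $\lV\Z\rV_*\le\lV\Z_T\rV_*+\lV\Z_{T^\perp}\rV_*$. For the first term I use that $\Z_T$ has rank at most $2r$ (since $T=\{\pmb{U}\pmb{A}^*+\pmb{B}\pmb{V}^*\}$ and the projection of any matrix onto $T$ lands in a space of matrices of rank $\le 2r$), so $\lV\Z_T\rV_*\le\sqrt{2r}\,\lV\Z_T\rV_F$ by Cauchy--Schwarz on the singular values. Combining, $\lV\Z\rV_*\le\sqrt{2r}\,\lV\Z_T\rV_F+\sqrt{r}\,\lV\Z_T\rV_F=\lk\sqrt2+1\rk\sqrt r\,\lV\Z_T\rV_F\le\lk\sqrt2+1\rk\sqrt r\,\lV\Z\rV_F$, since $\lV\Z_T\rV_F\le\lV\Z\rV_F$ by orthogonality of the projection. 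This yields the claimed bound on $\mD(\X_0)$, and taking limits extends it to $\overline{\mD(\X_0)}$.

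The one subtlety — and the place where I expect the argument to need the most care — is justifying the step $\lV\Z_{T^\perp}\rV_*\le-\lg\pmb{U}\pmb{V}^*,\Z_T\rg$ rigorously from the defining inequality $\lV\X_0+t\Z\rV_*\le\lV\X_0\rV_*$ rather than from a derivative at $t=0$: one must pick the optimal subgradient $\pmb{W}$ and use that $\lg\pmb{E},\Z\rg$ can be made equal to $\lV\Z_{T^\perp}\rV_*$ by a suitable choice of $\pmb{E}\in T^\perp$ with operator norm $1$ (take $\pmb{E}$ from the SVD of $\Z_{T^\perp}$, noting $\Z_{T^\perp}$ is supported on the orthogonal complements of the column/row spaces of $\X_0$ so the relevant partial isometry indeed lies in $T^\perp$), together with $\lg\pmb{U}\pmb{V}^*,\Z_{T^\perp}\rg=0$. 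The symmetry hypothesis $\X_0\in\mathcal{S}^n$ makes $\pmb{U}$ and $\pmb{V}$ coincide up to signs and keeps all the projections within reach, but it is not essential to this particular estimate; the rank-$2r$ bound on $\Z_T$ and the operator-norm-one subgradient are the two facts doing the real work.
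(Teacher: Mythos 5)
Your proposal is correct and follows essentially the same route as the paper: the paper simply cites the characterization $\overline{\mD(\X_0)}=\{\Z:-\lg\pmb{U}\pmb{V}^{*},\Z\rg\ge\lV\Z_{T^{\perp}}\rV_{*}\}$ from Krahmer--Kümmerle (Lemma 4.1) where you rederive the needed inclusion from the nuclear-norm subdifferential, and then both arguments run the identical chain $\lV\Z\rV_{*}\le\lV\Z_{T}\rV_{*}+\lV\Z_{T^{\perp}}\rV_{*}\le\sqrt{2r}\lV\Z_{T}\rV_{F}-\lg\pmb{U}\pmb{V}^{*},\Z\rg\le(\sqrt{2}+1)\sqrt{r}\lV\Z\rV_{F}$. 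Your version is in fact marginally tighter (using $\lV\Z_{T}\rV_{F}\le\lV\Z\rV_{F}$ throughout), but the decomposition, the rank-$2r$ bound on $\Z_{T}$, and the $\lV\pmb{U}\pmb{V}^{*}\rV_{F}=\sqrt{r}$ estimate are the same ingredients.
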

\begin{proof}
We denote the SVD of $\X_0$ by $\X_0=\pmb{U}\pmb{\Sigma}\pmb{V}^{*}$.	
Then \cite[Lemma 4.1]{krahmer2021convex} states that
\begin{equation}\label{dual}	
            \overline{\mD\lk \X_0\rk}=\left\{ \Z\in\R^{n\times n}:-\lg\pmb{U}\pmb{V}^{*},\Z\rg\ge\lV\Z_{T_{\X_0}^{\perp}}\rV_{*}\right\}.
	\end{equation}
Thus, for any  $\Z\in\overline{\mD\lk \X_0\rk}$, we have
\begin{equation*}	
	\begin{aligned}
	    \lV\Z\rV_{*}&\le \lV\Z_{T_{\X_0}}\rV_{*}+ \lV\Z_{T_{\X_0}^{\perp}}\rV_{*}\\
	                &\le \sqrt{2r}\lV\Z_{T_{\X_0}}\rV_{F}-\lg\pmb{U}\pmb{V}^{*},\Z\rg\\
	                &\le \sqrt{2r}\lV\Z_{T_{\X_0}}\rV_{F}+\lV\pmb{U}\pmb{V}^{*}\rV_{F}\lV\Z\rV_{F}\\
	                &\le \sqrt{2r}\lV\Z\rV_{F}+\lV\pmb{U}\rV_{op}\lV\pmb{V}^{*}\rV_{F}\lV\Z\rV_{F}\\
	                &\le \lk\sqrt{2r}+\sqrt{r}\rk\lV\Z\rV_{F}. 
	                \end{aligned}
	\end{equation*}	
	In the second inequality, we have used the fact that $\Z_{T_{\X_0}}$ has rank at most $2r$ and (\ref{dual}). 
	The fourth inequality follows from the fact that, for any $\pmb{U},\pmb{V}$, we have $\lV\pmb{U}\pmb{V}^{*}\rV_{F}\le\lV\pmb{U}\rV_{op}\lV\pmb{V}^{*}\rV_{F}$.
\end{proof}

In the noiseless scenario, i.e., $\eta=0$, the matrix $\X_0$ is the unique minimizer of the semidefinite program (\ref{model})
if and only if the null space of $\mA$ does not intersect the descent cone $\mD\lk \X_0\rk$ except at the origin.
Then if the following quantity for a matrix $\X_0$ is bounded away from $0$,
 \begin{equation*}	     
    \lambda_{\min}\lk\mA;\mD\lk \X_0\rk\cap \mathbb{S}_{F}\rk:=\inf_{\Z\in\mD\lk \X_0\rk \backslash \left\{0\right\}\cap \mathbb{S}_{F}}\frac{\lV\mA\lk\Z\rk\rV_{\ell_p}}{\lV\Z\rV_{F}},
	         \end{equation*} 
	        which is often referred to as the minimum conic singular value \cite{chandrasekaran2012convex,tropp2015convex,kueng2017low,krahmer2021convex}, 
            the intersection of the null space of $\mA$ and $\mD\lk \X_0\rk$ is empty except at the origin.
For our specific situation, we only need to focus on the Toeplitz matrices that fall within the descent cone $\mD\lk \X_0\rk$. 
Therefore, we only need to evaluate the minimum conic singular value within the set  $\mD\lk \X_0\rk\cap \mT$. 
On the other hand, to ensure universal outcomes for all Toeplitz matrices of rank at most $r$, the descent cone $\mD\lk \X_0\rk$
can be extended to 
 \begin{equation}	     
    \mK^{r}_{\mT}:=\bigcup_{\X_0}\lk\mD\lk \X_0\rk\cap \mT\rk,
	         \end{equation}
	         where the union runs over all Toeplitz matrices $\X_0\in\mathcal{S}^{n} \backslash \left\{0\right\}$ of rank at most $r$.
	         Therefore, the minimum conic singular value that this paper focuses on is
	 \begin{equation}	     
    \lambda_{\min}\lk\mA; \mK^{r}_{\mT}\cap \mathbb{S}_{F}\rk:=\inf_{\Z\in \mK^{r}_{\mT} \backslash \left\{0\right\}\cap \mathbb{S}_{F}}\frac{\lV\mA\lk\Z\rk\rV_{\ell_p}}{\lV\Z\rV_{F}}.
	         \end{equation}         
	       
	       We provide a concise description to prove Theorem \ref{thm} through $\lambda_{\min}\lk\mA; \mK^{r}_{\mT}\cap \mathbb{S}_{F}\rk$.
It follows from the triangle inequality that
 \begin{equation*}	     
    \lV\mA\lk\widehat{\X}-\X_0\rk\rV_{\ell_p}\le \lV\mA\lk\widehat{\X}\rk-\pmb{b}\rV_{\ell_p}+\lV\mA\lk\X_{0}\rk-\pmb{b}\rV_{\ell_p}\le2\eta.
	         \end{equation*}
	         Thus, as presented in \cite[Proposition 2.2]{chandrasekaran2012convex}, we have that
	 \begin{equation}\label{aaa}	
	 \begin{aligned}     
   \lV\widehat{\X}-\X_0\rV_{F}&=\frac{\lV\mA\lk\widehat{\X}-\X_0\rk\rV_{\ell_p}}{\lV\mA\lk\widehat{\X}-\X_0\rk\rV_{\ell_p}/\lV\widehat{\X}-\X_0\rV_{F}}\\
   &\le\frac{2\eta}{  \lambda_{\min}\lk\mA; \mK^{r}_{\mT}\cap \mathbb{S}_{F}\rk}.
   \end{aligned}
	         \end{equation}         
Equation (\ref{aaa}) indicates that to prove Theorem \ref{thm}, we need to show that $\lambda_{\min}\lk\mA; \mK^{r}_{\mT}\cap \mathbb{S}_{F}\rk$ is bounded away from zero.
Mendelson’s small ball method can lead to the geometric analysis of $\lambda_{\min}\lk\mA; \mK^{r}_{\mT}\cap \mathbb{S}_{F}\rk$.
By Lemma \ref{small ball}, with probability exceeding $1-\exp\lk-2t^{2}\rk$ we have that
 \begin{equation}\label{small1}
 \begin{aligned}
\lambda_{\min}&\lk\mA; \mK^{r}_{\mT}\cap \mathbb{S}_{F}\rk\\
&	\ge
	m^{1/p}\lk\alpha \mathcal{Q}_{2\alpha}\lk \mK^{r}_{\mT}\cap \mathbb{S}_{F};\pmb{\xi}\pmb{\xi}^{\mathsf{T}}\rk-2\mathcal{R}_{m}\lk \mK^{r}_{\mT}\cap \mathbb{S}_{F};\pp\pp^{\mathsf{T}}\rk-\frac{\alpha t}{\sqrt{m}}\rk.
	\end{aligned}
	  \end{equation}  
	  Thus, we need a lower bound on the small ball function $\mathcal{Q}_{2\alpha}\lk \mK^{r}_{\mT}\cap \mathbb{S}_{F};\pmb{\xi}\pmb{\xi}^{\mathsf{T}}\rk$ and an upper bound on the supremum of the empirical process 
	  $\mathcal{R}_{m}\lk \mK^{r}_{\mT}\cap \mathbb{S}_{F};\pp\pp^{\mathsf{T}}\rk$.

The second ingredient is that when confined to the space of the Toeplitz matrices $\mT$ (it can be seen that $\mK^{r}_{\mT}\subset \mT$), the small ball function is bounded away from zero for all $\mu\ge1$.
\begin{theorem}\label{small}
Let $\pp=\lk\xi_{1},\ldots,\xi_{n}\rk^{\mathsf{T}}$ be a random vector whose entries are i.i.d. copies of a random variable $\xi$ satisfying $\E\xi=0$, $\E\xi^2=1$, $\E\xi^4=\mu\ge1$
and $\lV\xi\rV_{\psi_{2}}\le K$. 
Then it holds that for $0<\alpha<1$,
\begin{equation}\label{ball}	
\begin{aligned}     
     \mathcal{Q}_{\alpha}\lk\mT\cap \mathbb{S}_{F};\pmb{\xi}\pmb{\xi}^{\mathsf{T}}\rk&:=\inf_{\Z\in\mT\cap \mathbb{S}_{F}}\mathbb{P}\lk\lv \lg\Z,\pmb{\xi}\pmb{\xi}^{\mathsf{T}}\rg\rv\ge\alpha\rk \\
	      & \gtrsim \lk1-\alpha^2\rk^{2}\min\left\{  \frac{4}{K^{8}},\lk\frac{\mu}{1+K^4}\rk^2,1\right\}.
	      \end{aligned}
		\end{equation}
\end{theorem}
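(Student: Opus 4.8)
The plan is to lower-bound $\mathbb{P}(|\langle \Z,\pp\pp^{\mathsf{T}}\rangle|\ge\alpha)$ uniformly over $\Z\in\mT\cap\mathbb{S}_F$ by combining a second-moment (Paley–Zygmund) argument with control of the fourth moment, exploiting the fact that the Toeplitz structure forces $\langle\Z,\pp\pp^{\mathsf{T}}\rangle$ to have a manageable variance. First I would write $Y:=\langle\Z,\pp\pp^{\mathsf{T}}\rangle=\pp^{\mathsf{T}}\Z\pp$ and compute its first two moments. Since $\E\pp\pp^{\mathsf{T}}=\Id$, we get $\E Y=\trace\Z$. For a symmetric Toeplitz $\Z$ normalized by $\lV\Z\rV_F=1$, the diagonal is constant, so $\trace\Z = n z_0$; I would split into the case where $|\trace\Z|$ is bounded below (then $|Y|$ concentrates near a nonzero mean and the bound is easy via Hanson–Wright, Lemma \ref{hanson}) and the case where $|\trace\Z|$ is small (then I argue on the centered variable $Y-\E Y$). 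The key computation is the variance: using independence and the moment data $\E\xi=0,\E\xi^2=1,\E\xi^4=\mu$, one finds $\mathrm{Var}(Y)=(\mu-3)\sum_i \Z_{ii}^2 + 2\lV\Z\rV_F^2 = (\mu-3)n z_0^2 + 2$; crucially the Toeplitz constraint makes $\sum_i\Z_{ii}^2 = n z_0^2 \le \lV\Z\rV_F^2 = 1$, so $\mathrm{Var}(Y)$ is pinned between $\min\{2,\mu-1\}$ and $\max\{2,\mu-1\}$ — bounded below by $\min\{2,\mu-1\}>0$ for every $\mu>1$. This is exactly where the restriction to $\mT$ (as opposed to all of $\mathcal{S}^n$) rescues the Bernoulli case $\mu=1$ up to a limiting argument, and where $\mu\ge1$ enters.

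Next I would apply the Paley–Zygmund inequality to $W:=Y-\E Y$ (or to $Y^2$ after recentering): for any $\theta\in(0,1)$,
\begin{eqnarray*}
\mathbb{P}\bigl(W^2 \ge \theta\,\E W^2\bigr) \ge (1-\theta)^2\,\frac{(\E W^2)^2}{\E W^4}.
\end{eqnarray*}
So I need an \emph{upper} bound on $\E W^4 = \E (Y-\E Y)^4$. Here I would invoke the subgaussian norm $\lV\xi\rV_{\psi_2}\le K$: by hypercontractivity / the Hanson–Wright tail, $W$ is subexponential with $\lV W\rV_{\psi_1}\lesssim K^2\lV\Z\rV_F = K^2$, whence $\E W^4 \lesssim K^8$. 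Combined with the variance lower bound $\E W^2 \ge \min\{2,\mu-1\}$ (and also the trivial $\E W^2\le \mathrm{const}$, plus the competing bound coming from when $\E Y\ne0$), Paley–Zygmund yields $\mathbb{P}(|W|\ge c\sqrt{\min\{2,\mu-1\}}) \gtrsim \min\{4/K^8,\ (\mu/(1+K^4))^2,\ 1\}$, and then a short argument passing from $|W|\ge\cdot$ to $|Y|\ge\alpha$ — splitting on whether $|\E Y|=|\trace\Z|$ is large or small, and absorbing the $(1-\alpha)^2$ factor — gives the stated bound. The three terms in the $\min$ on the right-hand side of \eqref{ball} correspond precisely to: the $\E W^4\lesssim K^8$ bound, the regime $\mu$ small where $\E W^2\approx\mu-1$ and the Cauchy–Schwarz estimate $\E W^2\le (\E W^4)^{1/2}\lesssim 1+K^4$ forces the ratio $(\E W^2)^2/\E W^4\gtrsim(\mu/(1+K^4))^2$, and the trivial truncation at $1$.

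The main obstacle I anticipate is not any single inequality but the bookkeeping of the two cases $|\trace\Z|$ large versus small in a way that produces a \emph{single} clean constant. When $\trace\Z\ne0$ the random variable $Y$ has nonzero mean and one should bound $\mathbb{P}(|Y|\ge\alpha)$ directly from below using that $Y$ cannot concentrate too tightly (again via the variance lower bound, which holds regardless of the mean); when $\trace\Z=0$ one works with $Y=W$ directly. Reconciling these — and in particular handling the boundary $\mu=1$ where the diagonal contribution must be exactly cancelled so that $\mathrm{Var}(Y)=2-nz_0^2\cdot 2\ge 0$ stays bounded away from $0$ only because $nz_0^2\le1$ — is the delicate point, and is exactly the structural gain from confining to $\mT$ rather than working on all symmetric matrices, where a rank-one $\Z=\e_1\e_1^{\mathsf{T}}$ would make $Y=\xi_1^2$ and $\mathrm{Var}(Y)=\mu-1=0$ for Bernoulli.
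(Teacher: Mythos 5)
Your overall strategy coincides with the paper's in its essentials: set $Y=\pp^{\mathsf{T}}\Z\pp$, compute the second moment exactly from the moment assumptions, bound the fourth moment by $O(K^8)$ via the Hanson--Wright tail (the paper's Facts 1 and 2), apply Paley--Zygmund, and exploit that the Toeplitz normalization forces $\sum_i\Z_{ii}^2=nz_0^2\le\lV\Z\rV_F^2=1$, so the off-diagonal term $2(1-nz_0^2)$ compensates the degenerate diagonal contribution. Your identification of where each of the three terms in the minimum comes from is also essentially right.

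The gap is in the centering. Applying Paley--Zygmund to $W=Y-\E Y$ gives $\mathbb{P}(|W|\ge c)\gtrsim\cdot$, but this does not transfer to $\mathbb{P}(|Y|\ge\alpha)$ when $a:=\trace\Z$ is comparable to $c$: nothing prevents the large-deviation mass of $W$ from sitting near $-a$, so that $|W|\ge c$ while $Y=W+a\approx 0$. Your case split ($|a|$ negligible: ignore it; $|a|$ large: Hanson--Wright concentration about the mean) leaves exactly this intermediate window open, and at $\mu=1$ the window is genuinely dangerous, because your variance lower bound $\min\{2,\mu-1\}$ vanishes there: when $nz_0^2$ is close to $1$ one has $\mathrm{Var}(Y)=2(1-nz_0^2)\to 0$, while $a=\sqrt{n}\cdot\sqrt{nz_0^2}$ need not exceed a $CK^2$ threshold when $n$ is of constant size, so neither of your cases applies. (Your remark that $\mathrm{Var}(Y)$ ``stays bounded away from $0$ because $nz_0^2\le 1$'' is false at $nz_0^2=1$, and ``$\mu=1$ up to a limiting argument'' does not parse, since $\mu$ is a fixed attribute of the given distribution.) The paper sidesteps all of this by running Paley--Zygmund on the \emph{uncentered} $Y^2$: since $\E Y^2=(\trace\Z)^2+(\mu-1)nz_0^2+2(1-nz_0^2)\ge 1$ for every $\mu\ge 1$ and every $n$, and $\E Y^4\lesssim(\trace\Z)^4+K^8$, the ratio $(\E Y^2)^2/\E Y^4$ is a single monotone function of $z_0$ on $[0,1/\sqrt{n}]$ whose endpoint values produce exactly the three terms of the stated minimum --- no case analysis, no sign-cancellation issue, and $\mu=1$ is covered because the mean enters the numerator. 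Replacing your centered Paley--Zygmund step by this uncentered version repairs the argument.
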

\begin{proof}
We begin by noting two essential facts derived from Lemma 9 and Lemma 8 in \cite{krahmer2020complex}. 
Utilizing these two facts, we will demonstrate that for any $\mu\ge1$, and in particular when $\mu=1$, the small ball function is bounded away from zero.\\
\textbf{Fact 1}: The first is that for any $\Z\in\mathcal{S}^{n}$ it holds that
\begin{equation}\label{fact1}
		\E\lv\pp^{\mathsf{T}}\Z\pp\rv^{2}= \lk\trace \Z\rk^{2}+\lk\E\xi^4-1\rk\sum_{i=1}^{n}\Z_{i,i}^{2}+2\sum_{i\neq j}^{n}\Z_{i,j}^{2}.
	\end{equation}
By a direct calculation,
\begin{equation*}	
	\begin{aligned}
	\E\lv \pp^{\mathsf{T}}\Z\pp\rv^{2}&=\E\lk\sum_{i,j}\xi_i \Z_{i,j}\xi_j\rk^{2}
	                                     =\E\sum_{i,\tilde{i},j,\tilde{j}}\xi_i \xi_{\tilde{i}} \xi_j \xi_{\tilde{j}}\Z_{i,j}\Z_{\tilde{i},\tilde{j}}\\
                                             &=\E\xi^{4}\cdot\sum_{i=1}^{n}\Z_{i,i}^{2}+2\sum_{i\neq j}\Z_{i,j}^{2}+
                                             \sum_{i\neq\tilde{i}}\Z_{i,i}\Z_{\tilde{i},\tilde{i}}\\
                                             &=\lk\trace \Z\rk^{2}+\lk\E\xi^4-1\rk\sum_{i=1}^{n}\Z_{i,i}^{2}+2\sum_{i\neq j}^{n}\Z_{i,j}^{2}.
	 \end{aligned}
	\end{equation*}	
\textbf{Fact 2}: The second is that for any $\Z\in\mathcal{S}^{n}$ it holds that
\begin{equation}\label{fact2}
		\E\lv\pp^{\mathsf{T}}\Z\pp\rv^{4}\lesssim  \lk\trace \Z\rk^{4} + K^8 \lV\Z\rV_{F}^{4}.
		\end{equation}
	By the Hanson-Wright inequality in Lemma \ref{hanson}, we have that
	\begin{equation*}
	\begin{aligned}
	&\E \lv  \pp^{\mathsf{T}}\Z\pp - \E\left[ \pp^{\mathsf{T}}\Z\pp \right]\rv^{4}\\
	&= \int_{0}^{\infty} 4t^{3} \ \mathbb{P} \lk \lv  \pp^{\mathsf{T}}\Z\pp - \E\lz\pp^{\mathsf{T}}\Z\pp\rz\rv >  t  \rk  dt \\
	&\le 8 \int_{0}^{\infty} t^{3} \  \exp \lk-c  \frac{t^{2}}{K^{4} \lV\Z \rV^2_{F}} \rk dt   +  8\int_{0}^{\infty} t^{3} \  \exp \lk-c  \frac{t}{K^{2} \lV \Z\rV_{op}} \rk dt   \\
	&= 8 K^{8}\lV\Z\rV_{F}^{4}\int_{0}^{\infty} x^{3} \  \exp \lk -cx^{2} \rk dx  + 8K^{8}\lV\Z\rV_{op}^{4} \int_{0}^{\infty} x^{3} \  \exp \lk -cx \rk dx   \\
	&\lesssim  K^{8}\lV\Z\rV_{F}^{4}.
	\end{aligned}
	\end{equation*}
		Then the triangle inequality yields that
	\begin{equation*}	
	\begin{aligned}
	\E\lv \pp^{\mathsf{T}}\Z\pp\rv^{4}
	 & \lesssim
	 \lv \E \pp^{\mathsf{T}}\Z\pp\rv^{4}+\E \lv  \pp^{\mathsf{T}}\Z\pp - \E\left[ \pp^{\mathsf{T}}\Z\pp \right]\rv^{4}\\
	 &=\lk \trace\Z \rk^{4}+\E \lv \pp^{\mathsf{T}}\Z\pp - \E\left[ \pp^{\mathsf{T}}\Z\pp \right]\rv^{4}\\
	 &\lesssim\lk\trace\Z \rk^{4}+K^{8}\lV\Z\rV_{F}^{4}.
	 \end{aligned}
	\end{equation*}	

		We now turn to proving (\ref{ball}).
		For $\Z\in\mT\cap \mathbb{S}_{F}$, let the diagonal element of $\Z$ be $z_0$. Since $\lV\Z\rV_{F}=1$, we have $0\le \lv z_0\rv\le\frac{1}{\sqrt{n}}$.
		 By (\ref{fact1}), we have that
		  \begin{equation}\label{bbbb}
	 \begin{aligned}
		\E\lv\pp^{\mathsf{T}}\Z\pp\rv^{2}&= \lk\trace \Z\rk^{2}+\lk\E\xi^4-1\rk\sum_{i=1}^{n}\Z_{i,i}^{2}+2\sum_{i\neq j}^{n}\Z_{i,j}^{2}\\
		                                    &=\lk n z_0\rk^2+\lk\mu-1\rk n z_{0}^2+2\lk1-nz_{0}^{2}\rk\\
		                                    &=\lz n^{2}+\lk\mu-3\rk n \rz z_{0}^{2}+2.
		\end{aligned}
		 \end{equation}
Then it can be checked that $\E\lv\pp^{\mathsf{T}}\Z\pp\rv^{2}\ge1=\lV\Z\rV_{F}^2$. Thus, by the Paley-Zygmund inequality,
  \begin{equation*}
  \begin{aligned}
		\mathbb{P}\lk\lv\pp^{\mathsf{T}}\Z\pp\rv^{2}
		\ge\alpha^2\lV\Z\rV^{2}_{F}\rk&\ge\mathbb{P}\lk\lv\pp^{\mathsf{T}}\Z\pp\rv^{2}
		\ge 		\alpha^2\E\lv\pp^{\mathsf{T}}\Z\pp\rv^{2}\rk\\
		&\ge		\lk1-\alpha^2\rk^{2}\frac{\lk\E\lv\pp^{\mathsf{T}}\Z\pp\rv^{2}\rk^{2}}{\E\lv\pp^{\mathsf{T}}\Z\pp\rv^{4}}.
		\end{aligned}
		 \end{equation*}
		 
		 By (\ref{bbbb}) and (\ref{fact2}), we have 
\begin{equation*}
	 \begin{aligned}
		f\lk z_0\rk:=\inf_{\substack{\Z\in\mT\cap \mathbb{S}_{F}\\Z_{1,1}=z_0}}\frac{\lk\E\lv\pp^{\mathsf{T}}\Z\pp\rv^{2}\rk^{2}}{\E\lv\pp^{\mathsf{T}}\Z\pp\rv^{4}}
		&\gtrsim \inf_{\substack{\Z\in\mT\cap \mathbb{S}_{F}\\ \Z_{1,1}=z_0}}
\frac{ \lz\lk\trace \Z\rk^{2}+\lk\E\xi^4-1\rk\sum_{i=1}^{n}\Z_{i,i}^{2}+2\sum_{i\neq j}^{n}\Z_{i,j}^{2}
	\rz^{2} }{ \lk\trace \Z\rk^{4} + K^8 \lV\Z\rV_{F}^{4}}\\
	&= \frac{\lz \lk n^{2}+\lk\mu-3\rk n \rk z_{0}^{2}+2\rz^{2}}{\lk n z_0\rk^4+K^8}\\
	&\ge \lz\frac{ \lk n^{2}+\lk\mu-3\rk n \rk z_{0}^{2}+2}{\lk n z_0\rk^2+K^4}\rz^2\\
	&:=g\lk \lv z_0\rv\rk= \lk h\lk z^{2}_{0}\rk\rk^2.
		\end{aligned}
		 \end{equation*}	
		 
	By a direct calculation,
	\begin{equation}\label{h}
	\frac{\partial h\lk z^{2}_{0}\rk}{\partial \lk z^{2}_{0}\rk}=\frac{\lk n^{2}+\lk\mu-3\rk n \rk\cdot K^4-2n^2}{ \lk n^2 z_{0}^{2}+K^4\rk^2}.
		 \end{equation}
		Since the numerator in (\ref{h}) is constant and the denominator is positive, we have that $h\lk z^{2}_{0}\rk$ is monotonic on the interval $z^{2}_{0}\in\lz 0,1/n\rz$.	
	Besides, we have that for all $\mu\ge1$ and $z^{2}_{0}\in\lz 0,1/n\rz$, the numerator of $h\lk z^{2}_{0}\rk$ satisfies $\lk n^{2}+\lk\mu-3\rk n \rk z_{0}^{2}+2\ge0$.
	 	 Thus, $g$ is also monotonic on the interval $\lz 0,\frac{1}{\sqrt{n}}\rz$. Then we have that
		  \begin{equation*}
	 \begin{aligned}
		 \mathcal{Q}_{\alpha}\lk\mT\cap \mathbb{S}_{F};\pmb{\xi}\pmb{\xi}^{\mathsf{T}}\rk
		 &\ge\lk1-\alpha^2\rk^{2}\min f\lk z_0\rk\\
	         & \gtrsim\lk1-\alpha^2\rk^{2}\min\left\{ g\lk 0\rk,g\lk\frac{1}{\sqrt{n}} \rk\right\}\\
	         &\ge\lk1-\alpha^2\rk^{2}\min\left\{  \frac{4}{K^{8}},\lk\frac{n+\mu-1}{n+K^4}\rk^{2}\right\}\\
	         &\gtrsim\lk1-\alpha^2\rk^{2}\min\left\{  \frac{4}{K^{8}},\lk\frac{\mu}{1+K^4}\rk^2,1\right\}.
		\end{aligned}
		 \end{equation*}
\end{proof}

\subsection{Structured Random Matrix}
We investigate the spectral norm of the structured random matrix $\sum_{k=1}^{m}\mT\lk \pp_{k}\pp_{k}^{\mathsf{T}}\rk$,  which constitutes our third ingredient.
Our proof relies on embedding Toeplitz matrices into circulant matrices, which may inspire characterizations of the spectral norms of other structured random matrices.
 We provide the following theorem.
 \begin{theorem}\label{toep}
Let $\pp=\lk\xi_{1},\ldots,\xi_{n}\rk^{\mathsf{T}}$ be a random vector whose entries are i.i.d. copies of a random variable $\xi$ satisfying $\E\xi=0$, $\E\xi^2=1$ and $\lV\xi\rV_{\psi_{2}}\le K$.
Assume that $\left\{\pp_k\right\}_{k=1}^{m}$ are i.i.d. copies of $\pp$,
then we have
 \begin{equation}	   
   \E\lV\sum_{k=1}^{m}\mT\lk \pp_{k}\pp_{k}^{\mathsf{T}}\rk-m\E\mT\lk \pp\pp^{\mathsf{T}}\rk\rV_{op}\le CK^2\lk\sqrt{m}\log n+\log^{3/2} n\rk,
	         \end{equation}
	         where $C>0$ is a numerical absolute constant.
 \end{theorem}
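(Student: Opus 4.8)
The plan is to exploit that $\mT(\pp\pp^{\mathsf{T}})$ is a symmetric Toeplitz matrix whose symbol is an explicit quadratic form in $\pp$, together with the circulant dilation of symmetric Toeplitz matrices, whose eigenvalues are available in closed form. Concretely, for symmetric $M$ the projection $\mT(M)$ is the symmetric Toeplitz matrix with symbol $(z_0,\dots,z_{n-1})$ given by $z_0=\tfrac1n\trace M$ and $z_d=\tfrac1{n-d}\sum_{i=1}^{n-d}M_{i,i+d}$ for $d\ge1$; in particular $\E\mT(\pp\pp^{\mathsf{T}})=\Id_n$. Hence $S:=\sum_{k=1}^m\mT(\pp_k\pp_k^{\mathsf{T}})-m\,\Id_n$ is the symmetric Toeplitz matrix with symbol $(Z_0,\dots,Z_{n-1})$, where $Z_0=\sum_k(\tfrac1n\|\pp_k\|_2^2-1)$ and $Z_d=\sum_k\tfrac1{n-d}\sum_{i=1}^{n-d}(\pp_k)_i(\pp_k)_{i+d}$ for $d\ge1$. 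I would then embed $S$ as the leading $n\times n$ principal block of the symmetric circulant matrix $C$ of size $N=2n-1$ whose first row is $(Z_0,Z_1,\dots,Z_{n-1},Z_{n-1},\dots,Z_1)$; since a principal submatrix of a symmetric matrix cannot have larger operator norm, $\|S\|_{op}\le\|C\|_{op}=\max_{0\le\ell<N}|\lambda_\ell|$ with the explicit eigenvalues $\lambda_\ell=Z_0+2\sum_{d=1}^{n-1}Z_d\cos(2\pi d\ell/N)$.

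The next step is to recognise each $\lambda_\ell$ as a centered quadratic form and apply Hanson--Wright. Let $A_\ell$ be the symmetric Toeplitz matrix with symbol $a_0=1/n$, $a_d=\cos(2\pi d\ell/N)/(n-d)$ ($d\ge1$); a short computation shows $\pp_k^{\mathsf{T}}A_\ell\pp_k=\tfrac1n\|\pp_k\|_2^2+2\sum_{d=1}^{n-1}\tfrac{\cos(2\pi d\ell/N)}{n-d}\sum_{i=1}^{n-d}(\pp_k)_i(\pp_k)_{i+d}$ with $\E[\pp_k^{\mathsf{T}}A_\ell\pp_k]=1$, so that $\lambda_\ell=\sum_{k=1}^m(\pp_k^{\mathsf{T}}A_\ell\pp_k-1)$. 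Stacking $\pp_1,\dots,\pp_m$ into a single vector with $mn$ i.i.d.\ entries of unit variance and $\psi_2$-norm $\le K$, and setting $B_\ell:=\Id_m\otimes A_\ell\in\mathcal S^{mn}$, this reads $\lambda_\ell=\xv^{\mathsf{T}}B_\ell\xv-\E[\xv^{\mathsf{T}}B_\ell\xv]$, and Lemma~\ref{hanson} gives $\mathbb P(|\lambda_\ell|>t)\le 2\exp\!\big(-c\min\{t^2/(K^4 m\|A_\ell\|_F^2),\,t/(K^2\|A_\ell\|_{op})\}\big)$ for all $t\ge0$, using $\|B_\ell\|_F^2=m\|A_\ell\|_F^2$ and $\|B_\ell\|_{op}=\|A_\ell\|_{op}$. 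A direct computation (counting entries along each diagonal) then gives $\|A_\ell\|_F^2=\tfrac1n+2\sum_{d=1}^{n-1}\tfrac{\cos^2(2\pi d\ell/N)}{n-d}\le\tfrac1n+2\sum_{j=1}^{n-1}\tfrac1j\lesssim\log n$, whence $\|A_\ell\|_F\lesssim\sqrt{\log n}$ and, merely via $\|\cdot\|_{op}\le\|\cdot\|_F$, also $\|A_\ell\|_{op}\lesssim\sqrt{\log n}$ uniformly in $\ell$.

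Finally, combining the tail bound with a union bound over the $N=2n-1$ values of $\ell$ and the standard maximal inequality for random variables with mixed sub-gaussian/sub-exponential tails (see, e.g., \cite{vershynin2018high}) yields
\[
\E\|S\|_{op}\le\E\max_{0\le\ell<N}|\lambda_\ell|\lesssim K^2\sqrt m\,\sqrt{\log n}\,\sqrt{\log N}+K^2\sqrt{\log n}\,\log N\lesssim K^2\big(\sqrt m\,\log n+\log^{3/2}n\big),
\]
which is the assertion. I expect the main obstacle to be bookkeeping rather than conceptual: one must pin down the circulant dilation and its eigenvalue formula exactly, verify the quadratic-form identity for $\lambda_\ell$ including the centering constant, and --- the one genuinely delicate point --- resist estimating $\|A_\ell\|_{op}$ through a second circulant embedding of $A_\ell$, which would only give $\|A_\ell\|_{op}\lesssim\log n$ and would cost an extra factor $\sqrt{\log n}$; the crude bound $\|A_\ell\|_{op}\le\|A_\ell\|_F\lesssim\sqrt{\log n}$ is exactly what is needed to land the $\log^{3/2}n$ term.
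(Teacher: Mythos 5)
Your proposal is correct and follows essentially the same route as the paper: compute $\E\mT(\pp\pp^{\mathsf{T}})=\Id_n$, embed the centered Toeplitz matrix into a $(2n-1)\times(2n-1)$ circulant, write each explicit eigenvalue as a centered quadratic form with block-diagonal coefficient matrix $\Id_m\otimes A_\ell$, apply Hanson--Wright with $\|A_\ell\|_F^2\lesssim\log n$ and $\|A_\ell\|_{op}\le\|A_\ell\|_F$, and finish with a union bound over the $2n-1$ eigenvalues and tail integration. The paper's proof is the same argument in the same order, down to the choice of bounding the operator norm of the coefficient matrix by its Frobenius norm to obtain the $\log^{3/2}n$ term.
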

 \begin{proof}
We divide the proof into four steps.

\textbf{Step 1: Expectation Calculation.} It can be seen that
  \begin{equation*}
\E\mT\lk \pp\pp^{\mathsf{T}}\rk=\mT\lk \E\pp\pp^{\mathsf{T}}\rk=\pmb{I}_{n}\in\mT.
	         \end{equation*}
We then set the random matrix
\begin{equation*}
\begin{aligned}
\Z_0:=\sum_{k=1}^{m}\mT\lk \pp_{k}\pp_{k}^{\mathsf{T}}\rk-m\E\mT\lk \pp\pp^{\mathsf{T}}\rk 
= \left(\begin{array}{cccc}
 z_{0}& z_{1}&\cdots &z_{n-1}\\
 z_{1}& z_{0}&\cdots &z_{n-2}\\
\vdots &  \vdots &\ddots& \vdots\\
 z_{n-1}&z_{n-2}&\cdots&z_{0}\\ 
\end{array}\right),
\end{aligned}
\end{equation*}
 and let   
$\pmb{z}:= \lk z_0,z_1,\ldots,z_{n-1}\rk^{\mathsf{T}}$. 
Since each entry on a diagonal of the Toeplitz projection is given by the average of the corresponding diagonal, we have that
\begin{align*}
{z}_{\ell}=
\left\{\begin{aligned}
&\frac{1}{n}\sum_{k=1}^{m}\sum_{i=1}^{n}\lk\xi^{2}_{k,i}-1\rk,\quad \ell=0;\\
&\frac{1}{n-\ell}\sum_{k=1}^{m}\sum_{i=1}^{n-\ell}\xi_{k,i}\xi_{k,i+\ell},\quad \ell=1,2,\ldots,n-1.
\end{aligned}\right.
\end{align*}  
 Apparently, one has $\E{z}_{\ell}=0$ for all $0\le\ell\le n-1$. 
 
\textbf{Step 2: Toeplitz Matrix Embedding.} The harmonic structure of Toeplitz matrices motivates us to embed $\Z_0$ into a new circulant matrix $\pmb{C}_{\Z_0}\in\R^{\lk 2n-1\rk\times \lk2n-1\rk}$ such that
 {\small\begin{equation*}
 \pmb{C}_{\Z_0}=
 \left(\begin{array}{cccc|ccc}
 z_{0}& z_{1}&\cdots & z_{n-1}&z_{n-1}&\cdots&z_{1}\\
 z_{1}& z_{0}&\cdots &z_{n-2}&z_{n-1}&\cdots&z_{2}\\
\vdots &  \vdots &\ddots& \vdots&\vdots&\cdots&\vdots\\
 z_{n-1}&z_{n-2}&\cdots&z_{0}&z_{1}&\cdots&z_{n-1}\\ \hline
 z_{n-1}&z_{n-1}&\cdots&z_{1}&z_{0}&\cdots&z_{n-2}\\
  \vdots&\vdots&\vdots&\vdots &\vdots &\ddots&\vdots\\
  z_{1}&z_{2}&\cdots &z_{n-1}&z_{n-2}&\cdots &z_{0}\\
\end{array}\right).
 \end{equation*}}
 Note that $\Z_{0}$ is a submatrix of $\pmb{C}_{\Z_0}$, we only need to give an upper bound for $\E\lV\pmb{C}_{\Z_0}\rV_{op}$. 
 
 Let $\omega=e^{\frac{2\pi \mathrm{i}}{2n-1}}$ be the $\lk2n-1\rk$-th root of unity.   
 Then the eigenvalues of $\pmb{C}_{\Z_0}$ are given by
 \begin{equation*}
 \begin{aligned}
\lambda_{j}&=z_0+z_1 \omega^j+z_2\omega^{2j}+\cdots+z_{n-1}\omega^{\lk n-1\rk j}
                       +z_{n-1}\omega^{nj}+\cdots+z_1\omega^{\lk 2n-2\rk j}\\
                    &=z_0+z_1\lk\omega^j+\omega^{\lk 2n-2\rk j}\rk +\cdots+  z_{n-1}\lk\omega^{\lk n-1\rk j}+\omega^{nj}\rk\\
                    &=z_0+2\sum_{\ell=1}^{n-1}z_{\ell}\cos\lk \frac{2\pi j\ell}{2n-1}\rk,\quad \text{for all}\ j=0,1,\ldots,2n-2.
\end{aligned}
\end{equation*}
It can be seen that $\E\lambda_{j}=0$ and
 \begin{equation*}
\lV\Z_0\rV_{op}\le\lV\pmb{C}_{\Z_0}\rV_{op}=\max_{0\le j\le2n-2}\lv\lambda_j\rv.
\end{equation*}

  \textbf{Step 3: Eigenvalue Estimation.} 
  Note that $\lambda_{j}$ is a centered quadratic form of $\left\{\pp_1,\cdots,\pp_m\right\}$.
  Let the symmetric coefficient matrix $\pmb{H}^{j}$ be
   \begin{equation}
\pmb{H}_{\alpha,\beta}^{j}:=\frac{1}{n-\lv\ell\rv} \cos\lk \frac{2\pi j\lv\ell\rv}{2n-1}\rk,\quad 1\le\alpha,\beta\le n\ \text{and} \ \alpha-\beta=\ell.
\end{equation}
Let 
\begin{equation*}
\widetilde{\pp}^{\mathsf{T}}\widetilde{\pmb{H}}^{j}\widetilde{\pp}:=\left(\begin{array}{cccc}
\pp_{1}^{\mathsf{T}}& \pp_{2}^{\mathsf{T}}&\cdots& \pp_{m}^{\mathsf{T}}
\end{array}\right)
\left(\begin{array}{cccc}
\pmb{H}^{j}& &&\\
 & \pmb{H}^{j}&&\\
 &&\ddots& \\
 &&&\pmb{H}^{j}\\ 
\end{array}\right)
\left(\begin{array}{cccc}
 \pp_1\\
 \pp_2 \\
\vdots \\
 \pp_m\\ 
\end{array}\right).
\end{equation*}
 Thus, $\lambda_{j}$ can be written as
  \begin{equation*}
\lambda_{j}
= \widetilde{\pp}^{\mathsf{T}}\widetilde{\pmb{H}}^{j}\widetilde{\pp}-
\E\left[\widetilde{\pp}^{\mathsf{T}}\widetilde{\pmb{H}}^{j}\widetilde{\pp}\right].
\end{equation*}
Then, by the Hanson--Wright inequality in Lemma \ref{hanson}, we can get
 \begin{equation*}
\mathbb{P}\lk\lv\lambda_j\rv\ge t\rk\le 2  \exp \lk-c\min \left\{ \frac{t^{2}}{K^{4} \lV\widetilde{\pmb{H}}^{j}\rV^{2}_{F}}  , \frac{t}{K^{2} \lV \widetilde{\pmb{H}}^{j} \rV_{op}}  \right\} \rk.
\end{equation*}
  It can be seen that $\lV\widetilde{\pmb{H}}^{j}\rV_{op}=\lV\pmb{H}^{j}\rV_{op}$ and $\lV\widetilde{\pmb{H}}^{j}\rV_{F}=\sqrt{m}\lV\pmb{H}^{j}\rV_{F}$.
  It therefore suffices to estimate $\lV\pmb{H}^{j}\rV_{F}$, since $\lV\pmb{H}^{j}\rV_{op}\le\lV\pmb{H}^{j}\rV_{F}$.
  Since $\pmb{H}^{j}$ is a symmetric Toeplitz matrix, we have
   \begin{equation*}
   \begin{aligned}
\lV\pmb{H}^{j}\rV^{2}_{F}
&=\sum_{1\le\alpha,\beta\le n}\lv\pmb{H}_{\alpha,\beta}^{j}\rv^2
\le\sum_{1\le\alpha,\beta\le n; \alpha-\beta=\ell} \frac{1}{\lk n-\lv\ell\rv\rk^{2}}\\
&=2\sum_{\ell=1}^{n-1}\frac{1}{n-\lv\ell\rv}+\frac{1}{n}\le2\log n+2.
\end{aligned}
\end{equation*}
Therefore, 
    \begin{equation}
     \left\{
     \begin{array}{rl}
   \lV\widetilde{\pmb{H}}^{j}\rV_{op}&\le\sqrt{2\log n+2},\\
    \lV\widetilde{\pmb{H}}^{j}\rV_{F}&\le \sqrt{m\lk2\log n+2\rk}.
     \end{array}
           \right.
     \end{equation} 
     
\textbf{Step 4: Uniform Argument.} We define the event $\Omega_{j,t}:=\left\{\lv\lambda_j\rv<t\right\}$
and define $\Omega_{t}=\bigcap\limits_{0\le j\le2n-2} \Omega_{j,t}$.
Then
  \begin{equation*}
   \begin{aligned}
\mathbb{P}\lk\max_{0\le j\le2n-2}\lv\lambda_j\rv\ge t\rk
&=\mathbb{P}\lk\Omega_{t}^{c}\rk\le\sum_{0\le j\le 2n-2}\mathbb{P}\lk\Omega_{j,t}^{c}\rk\\
&\le \sum_{0\le j\le 2n-2}2  \exp \lk-c\min \left\{ \frac{t^{2}}{K^{4} \lV\widetilde{\pmb{H}}^{j}\rV^{2}_{F}}  , \frac{t}{K^{2} \lV \widetilde{\pmb{H}}^{j} \rV_{op}}  \right\} \rk\\
&\le 4n\cdot  \exp \lk-c\min \left\{ \frac{t^{2}}{K^{4} m \lk2\log n+2\rk}  , \frac{t}{K^{2} \sqrt{2\log n+2}}  \right\} \rk.   
\end{aligned}
\end{equation*}
Finally, we have that 
  \begin{equation}
  \begin{aligned}
\E\lV\Z_0\rV_{op}&\le\E\lV\pmb{C}_{\Z_0}\rV_{op}=\E\max_{0\le j\le2n-2}\lv\lambda_j\rv\\
&\le\int_{0}^{\infty}\mathbb{P}\lk\max_{0\le j\le2n-2}\lv\lambda_j\rv\ge t\rk dt   \\
&\le \int^{C_1K^2\sqrt{m}\log n}_{0}1dt\ +\int_{C_1K^2\sqrt{m}\log n}^{\infty} 4n\cdot \exp \lk-c \frac{t^{2}}{2K^{4} m \log n}\rk dt   \\
&+\int^{C_2K^2\log^{3/2} n}_{0}1dt+\int_{C_2K^2\log^{3/2} n}^{\infty} 4n\cdot \exp \lk-c \frac{t}{K^{2} \sqrt{2\log n}} \rk dt \\
&\le \widetilde{C}_1K^2\sqrt{m}\log n+\widetilde{C}_2K^2\log^{3/2} n.\\
\end{aligned}
\end{equation}
 \end{proof}
 
  \subsection{Proof of Theorem \ref{thm}}
We now put all things together. 
By Theorem \ref{small}, the small ball function satisfies
    \begin{equation*}	     
    \begin{aligned}
     \mathcal{Q}_{2\alpha}&\lk \mK^{r}_{\mT}\cap \mathbb{S}_{F};\pmb{\xi}\pmb{\xi}^{\mathsf{T}}\rk
     \ge \mathcal{Q}_{2\alpha}\lk \mT\cap \mathbb{S}_{F};\pmb{\xi}\pmb{\xi}^{\mathsf{T}}\rk\\
      &\gtrsim \lk1-4\alpha^2\rk^{2}\min\left\{  \frac{4}{K^{8}},\lk\frac{\mu}{1+K^4}\rk^2,1\right\}.
     \end{aligned}
	  \end{equation*}	
	Moreover, for the empirical process term, we have
		      \begin{equation*}
		      \begin{aligned}
		       \E&\lV\sum_{k=1}^{m}\varepsilon_{k}\mT\lk\pp_k\pp_k^{\mathsf{T}}\rk \rV_{op}\\
		       &\le   \E_{\varepsilon} \E_{\pp}\lV\sum_{k=1}^{m}\varepsilon_{k}\lz\mT\lk\pp_k\pp_{k}^{\mathsf{T}}\rk -\E_{\pp}\mT\lk\pp_k\pp_k^{\mathsf{T}}
		       \rk\rz\rV_{op}+ 
		       \E_{\varepsilon}\E_{\pp}\lV\sum_{k=1}^{m}\varepsilon_{k}\E_{\pp}\mT\lk\pp_{k}\pp_{k}^{\mathsf{T}}\rk \rV_{op}\\
		       &\le2  \E_{\pp}\lV\sum_{k=1}^{m}\lz\mT\lk \pp_{k}\pp_{k}^{\mathsf{T}}\rk-\E\mT\lk \pp_k\pp_{k}^{\mathsf{T}}\rk\rz\rV_{op}
		       + \E_{\varepsilon}\lV\sum_{k=1}^{m}\varepsilon_{k}\pmb{I}_n \rV_{op}\\
		       &\le 2CK^2\lk\sqrt{m}\log n +\log^{3/2} n\rk +\widetilde{C}\sqrt{m}\\
		       	    &\lesssim \lk K^2+1\rk\lk\sqrt{m}\log n +\log^{3/2} n\rk.        
		        \end{aligned}
		           \end{equation*} 
		In the second inequality, we have used the symmetrization principle \cite[Lemma 6.4.2]{vershynin2018high} and $\E_{\pp}\mT\lk\pp_k\pp_k^{\mathsf{T}}\rk=\pmb{I}_n$.
		            In the third inequality, we have used Theorem \ref{toep} and the estimate $\E_{\varepsilon}\lV\sum_{k=1}^{m}\varepsilon_{k}\pmb{I}_n \rV_{op}=\E_{\varepsilon}\lv \sum_{k=1}^{m}\varepsilon_{k} \rv\lesssim\sqrt{m}$.
Thus, we can get 
  \begin{equation}
   \E\lV\frac{1}{m}\sum_{k=1}^{m}\varepsilon_{k}\mT\lk\pp_k\pp_k^{\mathsf{T}}\rk \rV_{op}		
   \lesssim \lk K^2+1\rk \lk\frac{\log n}{\sqrt{m}}+\frac{\log^{3/2} n}{m}\rk. 
   \end{equation}          	           
Now, by Proposition \ref{des}, the supremum of the empirical process satisfies
\begin{equation*}	    
  \begin{aligned}
	            \mathcal{R}_{m}\lk \mK^{r}_{\mT}\cap \mathbb{S}_{F};\pp\pp^{\mathsf{T}}\rk
	            &=\E\sup_{\Z\in\mK^{r}_{\mT}\cap \mathbb{S}_{F}}\lv\frac{1}{m}\sum_{k=1}^{m}\varepsilon_{k}\left\langle\Z,\pp_k\pp_k^{\mathsf{T}}\right\rangle\rv\\
	            &=\E\sup_{\Z\in\mK^{r}_{\mT}\cap \mathbb{S}_{F}}\lv\left\langle\Z,\frac{1}{m}\sum_{k=1}^{m}\varepsilon_{k}\mT\lk\pp_k\pp_k^{\mathsf{T}}\rk\right\rangle\rv\\
	            &\le\lk\sqrt{2}+1\rk \sqrt{r}\cdot \E\lV\frac{1}{m}\sum_{k=1}^{m}\varepsilon_{k}\mT\lk\pp_k\pp_k^{\mathsf{T}}\rk \rV_{op}\\
	            &\lesssim \lk K^2+1\rk  \sqrt{r}\cdot \lk\frac{\log n}{\sqrt{m}}+\frac{\log^{3/2} n}{m}\rk.
	            \end{aligned}
	      	         \end{equation*} 
		      
Finally, we choose $\alpha=1/4$ and $ t=c_2\sqrt{m}$ in Lemma \ref{small ball}. 
By (\ref{small1}), provided $m\ge L r\log^2 n$ for a sufficiently large constant $L$, we have 
\begin{equation*}
\lambda_{\min}\lk\mA; \mK^{r}_{\mT}\cap \mathbb{S}_{F}\rk\gtrsim \tilde{c}m^{1/p}
 \end{equation*} 
 with probability exceeding $1-e^{-cm}$.
Here $c_2,L,c$ and $\tilde{c}$ are positive constants depending only on $K$ and $\mu$. 
Thus, by (\ref{aaa}) we have finished the proof.

\normalem
\bibliographystyle{plain}
\bibliography{ref}

\end{document}